\renewcommand{\section}{\@startsection%
{section}%
{1}%
{0em}%
{1.7em}%
{1.2em}%
{\normalfont\large\centering\bfseries}}
\renewcommand{\@seccntformat}[1]%
{\csname the#1\endcsname.\hspace{0.5em}}
\numberwithin{equation}{section}
\newtheorem{theorem}{Theorem}[section]
\newtheorem{lemma}{Lemma}[section]
\newtheorem{corollary}{Corollary}[section]
\theoremstyle{definition}
\newtheorem{definition}{Definition}
\newtheorem{remark}{Remark}
\newtheorem*{notation}{Notation}
\newtheorem*{acknowledgments}{Acknowledgments}
\newcommand{\reals}{\mathbb{R}}
\newcommand{\nats}{\mathbb{N}}
\newcommand{\integers}{\mathbb{Z}}
\newcommand{\complex}{\mathbb{C}}
\newcommand{\norm}[1]{\left\|#1\right\|}
\newcommand{\inner}[2]{\left\langle#1,#2\right\rangle}
\def\ocirc#1{\ifmmode\setbox0=\hbox{$#1$}\dimen0=\ht0 \advance\dimen0
  by1pt\rlap{\hbox to\wd0{\hss\raise\dimen0
  \hbox{\hskip.2em$\scriptscriptstyle\circ$}\hss}}#1\else {\accent"17 #1}\fi}
\DeclareMathOperator{\dom}{dom}
\begin{document}
\begin{titlepage}
\title{Spectral measures of Jacobi operators with random potentials
\footnotetext{%
Mathematics Subject Classification(2000):
47B36, 
47A25, 
39A12.} 
\footnotetext{%
Keywords:
Spectral measures;
Jacobi matrices;
Random potential}
}
\author{
\textbf{Rafael del Rio and Luis O. Silva}
\\[6mm]
\small Departamento de M\'{e}todos Matem\'{a}ticos y Num\'{e}ricos\\[-1.6mm]
\small Instituto de Investigaciones en Matem\'aticas Aplicadas y en Sistemas\\[-1.6mm]
\small Universidad Nacional Aut\'onoma de M\'exico\\[-1.6mm]
\small C.P. 04510, M\'exico D.F.\\[1mm]
\small\texttt{delrio@leibniz.iimas.unam.mx}\\[-1mm]
\small\texttt{silva@leibniz.iimas.unam.mx}
}
\date{}
\maketitle
\vspace{4mm}
\begin{center}
\begin{minipage}{5in}
  \centerline{{\bf Abstract}} \bigskip Let $H_\omega$ be a
  self-adjoint Jacobi operator with a potential sequence
  $\{\omega(n)\}_n$ of independently distributed random variables with
  continuous probability distributions and let $\mu_\phi^\omega$ be
  the corresponding spectral measure generated by $H_\omega$ and the
  vector $\phi$.  We consider sets $\mathcal{A}(\omega)$ which depend
  on $\omega$ in a particular way and prove
  that $\mu_\phi^\omega(\mathcal{A}(\omega))=0$ for almost every
  $\omega$.  This is applied to show equivalence relations between
  spectral measures for random Jacobi matrices and to study the
  interplay of the eigenvalues of these matrices and their
  submatrices.

\end{minipage}
\end{center}
\thispagestyle{empty}
\end{titlepage}
\section{Introduction}
\label{sec:intro}
Let $H_0$ be a Jacobi operator with zero main diagonal in a
Hilbert space with an orthonormal basis $\{\delta_k\}_{k\in I}$, where
$I$ is a finite or
countable index set. We consider
the random self-adjoint operator given by
\begin{equation*}
 H_{\omega}=H_0 +\sum_{n\in I}\omega(n)\inner{\delta_n}{\cdot}\delta_n\,,
\end{equation*}
where $\omega(n)$ are independent random variables with continuous
(may be singular) probability distributions.

It is a well known fact regarding Schr\"odinger and Jacobi operators
with ergodic potentials, that the probability of a given
$\lambda\in\reals$ being an eigenvalue is zero
\cite{MR1102675,MR0883643,MR1223779}.  Here we present an extended
result (Theorem \ref{thm:spectral-measure-sequence}) for $H_\omega$,
which is not necessarily ergodic, when the point $\lambda$ depends on
the sequence $\omega$ except for two entries $\omega(n_0)$ and
$\omega(n_0+1)$, $n_0\in I$. This is complemented by Theorem
\ref{thm:measurable-function} when $\lambda$ is a measurable function
of $\omega$.  Since $\lambda$ is allowed to depend on $\omega$,
it is possible to apply these results to obtain information about the
spectral behavior of the above mentioned operators.

As a first application, we study equivalence relations of spectral
measures
$\mu_n^\omega(\cdot):=\inner{\delta_n}{E_{H_\omega}(\cdot)\delta_n}$,
where $E_{H_\omega}$ is the family of spectral projections for
$H_\omega$ given by the spectral theorem.  By applying
Theorems~\ref{thm:spectral-measure-sequence} and
\ref{thm:measurable-function}, we obtain
equivalence of spectral measures for one-sided infinite random Jacobi
matrices with continuous (could be singular) probability
distributions, that is, $\mu_n^\omega\sim\mu_m^\omega$ for
a.\,e. $\omega$ and any $n,m$ in $I$. When these
distributions are not only continuous but absolutely continuous, the
equivalence of spectral measures was proven in \cite{MR1779620} with
different methods.  For spectral measures of double-sided infinite
Jacobi operators, the equivalence relations
$\mu_k^\omega+\mu_l^\omega\sim\mu_m^\omega+\mu_n^\omega$ for a.\,e. $\omega$
and any $k,l,m,n\in I$ are established.

A second application concerns the interplay of the eigenvalues of
Jacobi matrices and their submatrices. This has been studied in the
context of orthogonal polynomials, in particular, there are results
describing the behavior of eigenvalues of submatrices near a
neighborhood of an eigenvalue of the whole matrix \cite{MR1971777}
\cite[Sec.\,1.2.11]{MR2105088}.  Here we show, as a consequence of
Theorems~\ref{thm:spectral-measure-sequence} and
\ref{thm:measurable-function}, that eigenvalues of a Jacobi matrix do
not coincide with eigenvalues, moments or entries of its submatrices
almost surely.  Thus, it is not only true that one point is eigenvalue
of $H_\omega$ for at most a set of zero measure as mentioned above,
but an arbitrary eigenvalue of any submatrix (which depends on
$\omega$) is not an eigenvalue of $H_\omega$ almost surely.



This work is organized as follows. In Section~\ref{sec:preliminaries}
the notation is introduced along with some preliminary
concepts. Section~\ref{sec:main} is devoted to the proof of the main
results (Theorems \ref{thm:spectral-measure-sequence} and
\ref{thm:measurable-function}), where measurability conditions
play a key role. In Section 4, we apply the results of
the previous section to study equivalence relations between
spectral measures and the possible coincidence of eigenvalues with
sets of real numbers associated with submatrices.
\section{Preliminaries}
\label{sec:preliminaries}
In this section we fix the notation and introduce the setting of the
model. Mainly we use a notation similar to that in
\cite{MR1711536}. Fix $n_1,n_2$ in
$\integers\cup\{+\infty\}\cup\{-\infty\}$  define an
interval $I$ of $\integers$ as follows
\begin{equation*}
  I:=\{n\in\integers:n_1<n<n_2\}\,.
\end{equation*}
The linear space of $M$-valued sequences $\{\xi(n)\}_{n\in
  I}$ will be denoted by $l(I,M)$, that is,
\begin{equation*}
  l(I,M):=\{\xi:I\to M\}\,.
\end{equation*}
If $M$ is itself a Hilbert space, then one has a Hilbert space
\begin{equation*}
   l^2(I,M):=\{u\in l(I,M):\sum_{n\in I}\norm{\xi(n)}_M^2<\infty\}\,,
\end{equation*}
with inner product given by
\begin{equation*}
  \inner{\xi}{\eta}:=\sum_{n\in I}\inner{\xi(n)}{\eta(n)}_M\,.
\end{equation*}

Now, let us introduce a measure in $l(I,\reals)$ as follows. Let
$\{p_n\}_{n\in I}$ be a sequence of arbitrary probability measures on
$\mathbb{R}$  and consider the product measure $\mathbb{P}
=\mathop{\times}_{n\in I}p_n$ defined on the product $\sigma$-algebra
$\mathcal{F}$ of $l(I,\reals)$ generated by the cylinder sets, i.\,e,
by sets of the form $\{\omega:\omega(i_1)\in A_1,\dots,\omega(i_n)\in
A_n\}$ for $i_1,\dots, i_n\in I$, where $A_1,\dots,A_n$ are Borel sets in
$\reals$. We have thus constructed a measure space
$\Omega=(l(I,\reals),\mathcal{F},\mathbb{P})$.

Consider $a\in l(I,\reals)$ with $a(n)>0$ for all $n\in I$, and
$\omega\in\Omega$. Define, for $\xi\in l^2(I,\mathbb{C})$,
\begin{equation}
  \label{eq:main-difference-cases}
  (H\xi)(n):=
\begin{cases}
\omega(n)\xi(n)+a(n)\xi(n+1) & n=n_1+1,\quad n_1> -\infty,\\
(\tau \xi)(n) & n_1+1<n<n_2-1,\\
a(n-1)\xi(n-1)+ \omega(n)\xi(n) & n=n_2-1,\quad n_2< +\infty,
\end{cases}
\end{equation}
where
\begin{equation}
  \label{eq:tau-difference}
  (\tau \xi)(n):=a(n-1)\xi(n-1)+\omega(n)\xi(n)+a(n)\xi(n+1)\,.
\end{equation}
In the Hilbert space $ l^2(I,\mathbb{C})$, one can uniquely associate
a closed symmetric operator with $H$ (see \cite[Sec. 47]{MR1255973})
which we shall denote by $H_{\omega}$ to emphasize the dependence on
the sequence $\omega\in\Omega$. The operator $H_{\omega}$ is a Jacobi
operator having a Jacobi matrix as its matrix representation with
respect to the canonical basis $\{\delta_k\}_{k\in I}$ in
$l^2(I,\mathbb{C})$, where
\begin{equation}
  \label{eq:canonical}
  \delta_k(n)=
  \begin{cases}
    0 & n\ne k\\
    1 & n=k\,.
  \end{cases}
\end{equation}
$H_\omega$ is defined so that $\{\delta_k\}_{k\in
  I}\subset\dom(H_\omega)$.

As in the case of differential equations, one defines the Wronskian
associated with the difference equation
(\ref{eq:main-difference-cases}) by
\begin{equation*}
  W_n(\xi,\eta):=a(n)((\xi(n)\eta(n+1)-\eta(n)\xi(n+1))\,,
\qquad n_1< n<n_2-1\,.
\end{equation*}
It turns out that, for all $n,m$ such that $n_1< m<n<n_2-1$, the Green
formula (see \cite[Eq. \,1.20]{MR1711536}) holds
\begin{equation}
  \label{eq:green}
  \sum_{k=m+1}^n(\xi(\tau \eta)-(\tau \xi)\eta)(k)=W_n(\xi,\eta)-
W_m(\xi,\eta)\,.
\end{equation}
Besides this formula, the Wronskian shares some properties with the
Wronskian of the theory of differential equations, in particular, if
$W_n(\xi,\eta)=0$ for all $n$ in a subinterval of $I$, then $\xi$ and
$\eta$ are linearly dependent in that subinterval. This is verified
directly from the definition of the Wronskian.

 Now, assume that $I=\integers$ and consider the second-order
 difference equation
\begin{equation}
  \label{eq:recurrence-spectral}
  (\tau u)(n)=zu(n)\,,\qquad n\in\integers,\,z\in\complex\,,
\end{equation}
where $\tau$ is defined in (\ref{eq:tau-difference}). Fix
$m\in\integers$ and $z\in\complex$, and take the sequences $c_m(z),s_m(z)\in
l(\integers,\complex)$ being solutions of (\ref{eq:recurrence-spectral}) and
 satisfying the following initial conditions:
\begin{align}
  c_m(z,m-1)=1\,,&\qquad c_m(z,m)=0\,,\label{eq:initial-c}\\
  s_m(z,m-1)=0\,,&\qquad s_m(z,m)=1\,.\label{eq:initial-s}
\end{align}
Because of the linear independence of $c_m(z),s_m(z)$, they constitute a
fundamental system of solutions of
(\ref{eq:recurrence-spectral}). Note that for any
$n\in\integers$, $c_m(z,n),s_m(z,n)$ are polynomials of $z$. The roots
of these polynomials are measurable functions of $\omega$.

By means of the polynomials defined above we state the
following  result \cite{MR1711536},
\cite[Prop.\,A.1]{MR1616422}.
\begin{lemma}
  \label{lem:polynomials-canonical-basis}
Consider the operator $H_\omega$
with fixed $\omega\in\Omega$. For any fixed $n\in I$, we have
\begin{equation}
\label{eq:polynimials-canonical-basis}
  \delta_n=
  \begin{cases}
    s_{n_1+1}(H_\omega,n)\delta_{n_1+1} & -\infty<n_1\\
    c_{n_2}(H_\omega,n)\delta_{n_2-1} & n_2<+\infty\\
    s_{m+1}(H_\omega,n)\delta_{m+1} + c_{m+1}(H_\omega,n)\delta_{m}  &
-\infty=n_1,n_2=+\infty\quad \forall m\in I\,.
  \end{cases}
\end{equation}
\end{lemma}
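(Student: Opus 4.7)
The plan is to verify each of the three identities in Lemma~\ref{lem:polynomials-canonical-basis} by induction on the index $n$, matching the three-term recurrence satisfied by the polynomials $s_m(z,\cdot)$ and $c_m(z,\cdot)$ in their second argument with the action of $H_\omega$ on the canonical basis. The engine of every induction step is the elementary identity, valid for any interior $n$ by (\ref{eq:main-difference-cases}) and the symmetry of $H_\omega$,
\begin{equation*}
  H_\omega\delta_n = a(n-1)\delta_{n-1} + \omega(n)\delta_n + a(n)\delta_{n+1}\,,
\end{equation*}
in which one of the outer terms is absent when $n$ lies at a finite endpoint of $I$.

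For the first case ($n_1 > -\infty$), set $P_n(z) := s_{n_1+1}(z,n)$, so that (\ref{eq:initial-s}) gives $P_{n_1}(z) = 0$ and $P_{n_1+1}(z) = 1$. Assuming inductively that $\delta_k = P_k(H_\omega)\delta_{n_1+1}$ for all $n_1 < k \le n$, apply $H_\omega$ to $\delta_n$ and solve for $\delta_{n+1}$:
\begin{equation*}
  a(n)\delta_{n+1} = \bigl[(H_\omega - \omega(n))P_n(H_\omega) - a(n-1)P_{n-1}(H_\omega)\bigr]\delta_{n_1+1}\,.
\end{equation*}
By (\ref{eq:recurrence-spectral}) and (\ref{eq:tau-difference}) applied to $s_{n_1+1}(z,\cdot)$, the bracketed polynomial equals $a(n)P_{n+1}(z)$, closing the induction. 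At the base step $n = n_1+1$ the missing $a(n-1)\delta_{n-1}$ term in (\ref{eq:main-difference-cases}) is matched by $P_{n_1}(z) = 0$. The second case is entirely symmetric: set $P_n(z) := c_{n_2}(z,n)$, use (\ref{eq:initial-c}) to anchor the induction with $c_{n_2}(z, n_2) = 0$ and $c_{n_2}(z, n_2-1) = 1$, and run the recurrence downward in $n$.

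For the third case ($I = \integers$), fix $m$ and consider $Q_n := s_{m+1}(H_\omega,n)\delta_{m+1} + c_{m+1}(H_\omega,n)\delta_m$. By (\ref{eq:initial-c}) and (\ref{eq:initial-s}) with $m$ replaced by $m+1$, one has $Q_m = \delta_m$ and $Q_{m+1} = \delta_{m+1}$; since each of $s_{m+1}(z,\cdot)$ and $c_{m+1}(z,\cdot)$ satisfies the three-term recurrence in $n$, so does $Q_n$ as an element of $l^2(\integers,\complex)$, and the same algebraic step used above propagates $Q_n = \delta_n$ both upward and downward through $\integers$ from the base pair $(m, m+1)$. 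The only real obstacle — resolved automatically by the choice of initial conditions — is keeping the endpoint bookkeeping consistent in the first two cases, where the operator recurrence (\ref{eq:main-difference-cases}) truncates while the polynomial recurrence does not; the vanishing of $P_{n_1}$ (resp.\ $c_{n_2}(z, n_2)$) exactly fills the missing slot, reducing the whole argument to polynomial identities in $z$ lifted to operator identities via the functional calculus.
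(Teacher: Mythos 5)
Your proof is correct: the induction on the three-term recurrence, lifted to an operator identity through $H_\omega\delta_n=a(n-1)\delta_{n-1}+\omega(n)\delta_n+a(n)\delta_{n+1}$ and anchored by the initial conditions (\ref{eq:initial-c})--(\ref{eq:initial-s}), with the vanishing boundary values $s_{n_1+1}(z,n_1)=0$ and $c_{n_2}(z,n_2)=0$ absorbing the truncation of (\ref{eq:main-difference-cases}) at finite endpoints, is exactly the standard argument. The paper itself offers no proof of this lemma, citing Teschl and Gesztesy--Simon instead, and your induction reproduces the argument given there, so there is nothing to add.
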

The symmetric operator $H_\omega$ is not always self-adjoint. However,
in this work, we always consider $H_\omega$ to be a self-adjoint
operator for each $\omega\in\Omega$.  If one of the numbers $n_1,n_2$
is not finite, conditions for self-adjointness should be assumed. For
instance, when both $n_1$ and $n_2$ are infinite, the so called
Carleman criterion (cf. \cite[Chap.\,7 Sec.\,3.2]{MR0222718})
\begin{equation}
  \label{eq:carleman-gen-condition}
\sum_{n\in\nats}\frac{1}{\max\{a(-n-1),a(n-1)\}}=\infty
\end{equation}
entails self-adjointness of $H_{\omega}$.

Notice that the operator $H_{\omega}$ can be written as
\begin{equation*}
 H_{\omega}=H_0 +\sum_{n\in I}\omega(n)\inner{\delta_n}{\cdot}\delta_n\,,
\end{equation*}
where $H_0$ is a self-adjoint Jacobi operator with zero main diagonal.

For the self-adjoint operator $H_\omega$, we have the following
remarks.
\begin{remark}
  \label{rem:self-adjoint-wronskian}
  For every pair $\xi,\eta$ in the domain of the self-adjoint operator
  $H_\omega$,
\begin{equation*}
  \lim_{n\to\infty}W_n(\xi,\eta)=0
\end{equation*}
(see \cite[Sec.\,2.6]{MR1711536}).
\end{remark}
\begin{remark}
  \label{rem:simplcity-multiplicity}
  From (\ref{eq:polynimials-canonical-basis}), it follows that a
  self-adjoint Jacobi operator, whose corresponding matrix is finite
  or one-sided infinite, has simple spectrum (see
  \cite[Sec.\,69]{MR1255973}). Moreover, the last equation in
  (\ref{eq:polynimials-canonical-basis}) shows that, when both $n_1,n_2$ are
  infinite, two consecutive elements of the canonical
  basis constitute a generating basis for $H_\omega$ (see
  \cite[Sec. 72]{MR1255973}).
\end{remark}

Let $\mu_{\phi}^\omega$ be the spectral measure for $H_{\omega}$ and
the vector $\phi$, viz., the unique Borel measure on $\reals$ such that
\begin{equation*}
  \inner{\phi}{f(H_{\omega})\phi}=\int_\reals f(\lambda)d\mu_{\phi}^\omega(\lambda)
\end{equation*}
for any bounded function $f$. Equivalently,
\begin{equation}
  \label{eq:mu-through-spectral-projectors}
  \mu_{\phi}^\omega(\cdot)=\inner{\phi}{E_{H_{\omega}}(\cdot)\phi}\,,
\end{equation}
where $E_{H_{\omega}}$ is the family of spectral projections for
$H_{\omega}$ given by the spectral theorem.
\begin{notation}
Below, we shall repeatedly deal
with $\mu_{\delta_n}^\omega$ (see (\ref{eq:canonical})) and we denote
it by $\mu_{n}^\omega$ for short.
\end{notation}
\begin{definition}
  Given two measures $\nu$ and $\mu$ with the same collection of
  measurable sets, we say that $\mu$ is absolutely continuous with
  respect to $\nu$, denoted $\mu\prec\nu$, if for every measurable
  $\Delta$ such that $\nu(\Delta)=0$, it follows that
  $\mu(\Delta)=0$. Also, $\nu$ and $\mu$ are said to be equivalent,
  denoted $\nu\sim\mu$, if they are mutually absolutely continuous,
  that is, if they have the same zero sets.
\end{definition}

Suppose that at least one of the numbers $n_1,n_2$ is finite. By
inserting (\ref{eq:polynimials-canonical-basis}) into
(\ref{eq:mu-through-spectral-projectors}), one obtains, for an
arbitrary Borel set $\Delta\subset\reals$ \cite[Cor.\,A.2]{MR1616422},
\begin{equation}
  \label{eq:mu-n-mu-semi}
  \mu_{n}^{\omega}(\Delta)=
  \begin{cases}
    \int_\Delta s_{n_1+1}^2(\lambda,n)d\mu_{n_1+1}^{\omega}(\lambda) &
    n_1>-\infty\\
    \int_\Delta c_{n_2}^2(\lambda,n)d\mu_{n_2-1}^{\omega}(\lambda) &
    n_2<+\infty\,.
  \end{cases}
\end{equation}

When both numbers $n_1,n_2$ are infinite, let us define, for any Borel
$\Delta\subset\reals$ and $n\in\integers$, the matrix
 \begin{equation*}
   \boldsymbol{\mu}_n(\Delta):=
  \begin{pmatrix}
  \mu_{n}^{\omega}(\Delta) &
\inner{E_{H_\omega}(\Delta)\delta_n}{\delta_{n+1}} \\[1mm]
  \inner{E_{H_\omega}(\Delta)\delta_{n+1}}{\delta_n} &
  \mu_{n+1}^{\omega}(\Delta)
  \end{pmatrix}\,.
 \end{equation*}
The third equation in (\ref{eq:polynimials-canonical-basis}) implies
\begin{equation}
  \label{eq:mu-n-mu-matrix}
  \mu_{n}^{\omega}(\Delta)=
\int_\Delta\inner{d\boldsymbol{\mu}_m(\lambda)
\begin{pmatrix}
c_{m+1}(\lambda,n)\\
s_{m+1}(\lambda,n)
\end{pmatrix}
}{
\begin{pmatrix}
c_{m+1}(\lambda,n)\\
s_{m+1}(\lambda,n)
\end{pmatrix}
}_{\complex^2}\,.
\end{equation}
There exists a matrix (see comment after \cite[Lem.\,B.13]{MR1711536})
\begin{equation*}
  \boldsymbol{R}_m(\lambda)=
  \begin{pmatrix}
    a_m(\lambda) & b_m(\lambda)\\[1mm]
    b_m(\lambda) & 1-a_m(\lambda)
  \end{pmatrix}
\end{equation*}
such that
\begin{equation}
  \label{eq:absolute-continuity}
  \boldsymbol{\mu_m}(\Delta)=
\int_\Delta\boldsymbol{R}_m(\lambda)
d(\mu_{m}^{\omega}+\mu_{m+1}^{\omega})(\lambda)\,.
\end{equation}
\begin{remark}
\label{rem:equivalence}
Notice that from Remark~\ref{rem:simplcity-multiplicity} and
\cite[Sec.\,72]{MR1255973} and (\ref{eq:absolute-continuity}) it
follows that
$\mu_{k}^{\omega}+\mu_{k+1}^{\omega}\sim\mu_{l}^{\omega}+\mu_{l+1}^{\omega}$
for any $k,l\in\integers$.
\end{remark}
\section{Main results}
\label{sec:main}
Under the assumption that $H_\omega$ is ergodic, it is well known that
a fixed $r\in\reals$ is an eigenvalue of $H_\omega$ with probability
zero \cite[Thm.2.12]{MR1223779}, \cite[Prop.V.2.8]{MR1102675}
\cite[Thm. 9.5]{MR0883643}.  In the case of $H_\omega$ considered
here, the following result holds.
\begin{theorem}
  \label{thm:spectral-measure-sequence}
  Assume that $I$ contains at least three integers and suppose $n_0,n_0+1$
  are in $I$. Let the measures $p_{n_0},p_{n_0+1}$ be continuous (a
  continuous measure evaluated at a single point of $\reals$ equals
  zero).  Consider a finite or infinite sequence of real functions
  $\{r\}_k$ ($r_k:\Omega\to\reals$), not necessarily measurable, such
  that, for $\omega,\widetilde{\omega}\in\Omega$,
\begin{equation}
  \label{eq:property-of-r}
  r_k(\omega)=r_k(\widetilde{\omega})
\end{equation}
whenever $\omega(n)=\tilde{\omega}(n)$ for all $n\in
I\setminus\{n_0,n_0+1\}$. For any  non-zero
  element $\phi$ in the Hilbert space $l^2(I,\complex)$, either
\begin{equation}
  \label{eq:mu-is-zero}
  \mu_{\phi}^\omega(\cup_kr_k(\omega))=0
\end{equation}
for $\mathbb{P}$ a. e. $\omega$, or the set of $\omega$ where
(\ref{eq:mu-is-zero}) holds is not measurable.
\end{theorem}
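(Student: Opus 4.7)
The plan is to prove the theorem by a Fubini-type argument that exploits the hypothesis (\ref{eq:property-of-r}): each $r_k(\omega)$ depends only on the restricted sequence $\omega' := (\omega(n))_{n \in I \setminus \{n_0, n_0+1\}}$. Accordingly, I would factor $\Omega = \reals^2 \times \Omega'$, writing $\omega = (\omega_0, \omega')$ with $\omega_0 := (\omega(n_0), \omega(n_0+1))$ carrying the product measure $p_{n_0} \times p_{n_0+1}$, and reduce matters to showing that for each fixed $\omega'$ the section
$$B^{\omega'} := \{\omega_0 \in \reals^2 : \mu_\phi^{(\omega_0, \omega')}(\cup_k\{r_k(\omega')\}) > 0\}$$
is contained in a Borel $(p_{n_0} \times p_{n_0+1})$-null subset of $\reals^2$.

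The core step is the single-$r$ claim: for fixed $\omega'$ and a fixed $r \in \reals$, the set of $\omega_0$ making $r$ an eigenvalue of $H_{(\omega_0, \omega')}$ with $\phi$ not orthogonal to the eigenspace is $(p_{n_0} \times p_{n_0+1})$-null. Any candidate eigenfunction $\psi$ satisfies $(\tau \psi)(n) = r \psi(n)$ at every interior site, with $\omega(n)$ fixed for $n \neq n_0, n_0+1$; combined with the relevant $l^2$ or boundary conditions coming from (\ref{eq:main-difference-cases}), this forces the restrictions of $\psi$ to $\{n \leq n_0\} \cap I$ and to $\{n \geq n_0+1\} \cap I$ to each lie in an at most one-dimensional space, generated respectively by sequences $u_-, u_+$ depending only on $\omega'$ and $r$ (compare Lemma~\ref{lem:polynomials-canonical-basis}). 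Writing $\psi = \alpha u_-$ on the left and $\psi = \beta u_+$ on the right, the two remaining recurrences at $n_0$ and $n_0+1$ yield two equations affine in $(\omega(n_0), \omega(n_0+1), \alpha, \beta)$. In the generic subcase $u_-(n_0), u_+(n_0+1) \neq 0$, normalising $\alpha = 1$, $\beta = t$ gives $\omega(n_0) = A + Bt$ and $\omega(n_0+1) = C + D/t$ with $B, D \neq 0$, tracing a hyperbola that meets each vertical line in at most one point; continuity of $p_{n_0+1}$ together with Fubini then yield product measure zero. The degenerate subcases ($u_-$ or $u_+$ non-existent, or vanishing at a matching site) either force one of $\omega(n_0), \omega(n_0+1)$ onto a fixed finite set (still null by continuity of the corresponding $p_n$) or produce an empty locus, and so are handled in the same spirit.

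Taking a countable union over $k$ preserves being Borel null, so $B^{\omega'}$ sits in a Borel null set for every $\omega'$. If the bad set $B := \{\omega : \mu_\phi^\omega(\cup_k r_k(\omega)) > 0\}$ happens to be $\mathbb{P}$-measurable, Fubini then gives $\mathbb{P}(B) = \int (p_{n_0} \times p_{n_0+1})(B^{\omega'})\, d\mathbb{P}'(\omega') = 0$, since any measurable subset of a null set is null; consequently (\ref{eq:mu-is-zero}) holds $\mathbb{P}$-almost surely. Otherwise $B$ is non-measurable, which is the second horn of the stated dichotomy. The main obstacle is the single-$r$ analysis, in particular identifying the one-dimensional left/right solution spaces across the three possible geometries of $I$ (finite, one-sided infinite, doubly infinite) and systematically verifying each of the degenerate subcases; the Fubini wrap-up, once the dichotomy is framed correctly, is a soft consequence that sidesteps any need to assume measurability of $r_k$.
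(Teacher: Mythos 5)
Your proposal is correct and shares the paper's outer skeleton --- factor $\mathbb{P}$ through the two coordinates $(\omega(n_0),\omega(n_0+1))$, use (\ref{eq:property-of-r}) to freeze the values $r_k$, apply Fubini, and let the stated dichotomy absorb the possible non-measurability of the bad set --- but your core single-$r$ step is a genuinely different route. The paper fixes an eigenfunction $\psi$ of $H_{\omega_0}$ and perturbs one coordinate at a time: a Green formula/Wronskian argument (via Remark~\ref{rem:self-adjoint-wronskian}) shows the perturbed operator loses the eigenvalue unless $\psi(n_0)=0$, which produces the alternative (\ref{eq:at-least-one-one})/(\ref{eq:at-least-one-two}), the possibly non-measurable set $\mathcal{Q}_2$, and the delicate verification (\ref{eq:int-equals-chi}) that its $s$-sections are nevertheless measurable; moreover the paper works with $\mu_{n_1+1}^\omega$ (cyclicity) or $\mu_m^\omega+\mu_{m+1}^\omega$ and only afterwards passes to general $\phi$ by absolute continuity, which forces the separate case B. You instead describe, for fixed $\omega'$ and fixed $r$, the entire locus in the $(\omega(n_0),\omega(n_0+1))$-plane where $r$ is an eigenvalue: gluing the half-line solutions $u_\pm$ across the sites $n_0,n_0+1$ confines that locus to the explicit Borel null set $\{(x-A)(y-C)=BD\}$ (or to a line, or the empty set, in the degenerate cases), so the almost-everywhere statement in $(t,s)$ is immediate, no non-measurable intermediate set ever appears, and since $\mu_\phi^\omega(\{r\})>0$ already forces $r\in\sigma_p(H_\omega)$ you treat all $\phi$ and all three geometries of $I$ uniformly. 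The price is the case analysis of the gluing equations, plus one point you should make explicit: at an infinite endpoint, the at-most-one-dimensionality of the space spanned by $u_\pm$ does not follow from square-summability alone but from the limit-point property, which holds here because $H_\omega$ is assumed self-adjoint for every $\omega$ (this is precisely what the paper extracts from Remark~\ref{rem:self-adjoint-wronskian}); your degenerate cases also tacitly use that a nontrivial solution of the three-term recurrence cannot vanish at two consecutive sites, which is worth stating.
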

\begin{proof}
We consider two cases:

A) One of the numbers $n_1,n_2$ is
finite.

Without loss of generality let us assume that $n_1$ is finite.
By Remark~\ref{rem:simplcity-multiplicity}, $\delta_{n_1+1}$ is a cyclic
vector of $H_\omega$ for any $\omega\in\Omega$.

  Fix an element $r_{k_0}$ of the sequence $\{r_k\}_k$.
 Define the set
  \begin{equation*}
    \mathcal{Q}^{r_{k_0}}:=\{\omega\in\Omega:
 \mu_{n_1+1}^{\omega}(\{r_{k_0}(\omega)\})>0\}\,.
  \end{equation*}
  Let us construct a partition of $\mathcal{Q}^{r_{k_0}}$. If
  $\omega_0\in\mathcal{Q}^{r_{k_0}}$, then $r_{k_0}(\omega_0)$ is an
  eigenvalue of $H_{\omega_0}$ with corresponding eigenvector
  $\psi=E_{H_{\omega_0}}(\{r_{k_0}(\omega_0)\})\delta_{n_1+1}$. Due to
  the cyclicity of $\delta_{n_1+1}$, the converse is true, that is, if
  we have an eigenvalue $r$ of $H_{\omega_0}$, then
  $\mu_{n_1+1}^{\omega_0}(\{r\})>0$.

Analogously, if $\omega_0+t\delta_{n_0}\in\mathcal{Q}^{r_{k_0}}$ for
some $t\in\reals\setminus\{0\}$, there is a non-zero element $\xi$ of the
domain of $H_{\omega_0+t\delta_{n_0}}$ (which coincides with the domain of $
H_{\omega_0}$) such that
\begin{equation}
\label{eq:xi-eigenvalue}
  H_{\omega_0+t\delta_{n_0}}\xi=r_{k_0}(\omega_0)\xi\,.
\end{equation}
From (\ref{eq:main-difference-cases}), it is clear that both $\xi$ and
$\psi$ satisfy the difference equation
\begin{equation*}
  (\tau u)(n)=r_{k_0}(\omega_0)u(n)
\end{equation*}
for all $n$ such that $n_1+1<n<n_2-1$ and $n\ne n_0$. So, by
(\ref{eq:green}), $W_n(\xi,\psi)$ is constant for all $n$ such that
$n_0\le n<n_2$. Now, when $n_2$ is finite, both $\xi$ and
$\psi$ satisfy the difference equation (see
(\ref{eq:main-difference-cases}))
\begin{equation*}
  a(n-1)u(n-1)+ \omega(n)u(n)=r_{k_0}(\omega_0)u(n)\,,\quad
  \text{for } n=n_2-1\,.
\end{equation*}
This implies that $W_{n_2-2}((\xi,\psi))=0$, so the constant
$W_n(\xi,\psi)$, for all $n$ such that $n_0\le n<n_2-1$, is in fact
zero. If $n_2$ is infinite, then, from what was said in
Section~\ref{sec:preliminaries} (see
Remark~\ref{rem:self-adjoint-wronskian}) one concludes that
$W_n(\xi,\psi)=0$ for all $n\ge n_0$. Therefore, in both cases, $n_2$
finite or infinite, there exists $c\in\complex$ such that
$\xi(n)=c\psi(n)$ for all $n$ such that $n_0\le n<n_2-1$. This implies
that $\xi$ cannot satisfy (\ref{eq:xi-eigenvalue}) for $t\ne 0$ when
$\psi$ is an eigenvector with $\psi(n_0)\ne 0$. If $\psi(n_0)=0$, then
one may repeat the reasoning above for $n_0+1$, since, in this case,
it follows from (\ref{eq:main-difference-cases}) that $\psi(n_0+1)\ne
0$. Thus we assert that either
\begin{equation}
\label{eq:at-least-one-one}
\mu_{n_1+1}^{\omega_0+t\delta_{n_0}}(\{r_{k_0}(\omega_0)\})=0\,,\qquad
 \forall t\in\reals\setminus\{0\}\,,
\end{equation}
or
\begin{equation}
\label{eq:at-least-one-two}
  \mu_{n_1+1}^{\omega_0+s\delta_{n_0+1}}(\{r_{k_0}(\omega_0)\})=0\,,
\qquad
  \forall s\in\reals\setminus\{0\}\,,
\end{equation}
for any $\omega_0\in\mathcal{Q}^{r_{k_0}}$.  Let $\mathcal{Q}_1$ be
the set of $\omega\in\mathcal{Q}^{r_{k_0}}$ such that
(\ref{eq:at-least-one-one}) holds, and
$\mathcal{Q}_2=\mathcal{Q}^{r_{k_0}}\setminus\mathcal{Q}_1$. Thus we
have the partition
$\mathcal{Q}^{r_{k_0}}=\mathcal{Q}_1\cup\mathcal{Q}_2$.  Notice that,
if $\psi(n_0)= 0$, then $\psi$ is an eigenvector of
$H_{\omega_0+t\delta_{n_0}}$ for all $t\in\reals$. Thus, for any
$\omega_0\in\mathcal{Q}_2$,
\begin{equation}
  \label{eq:mu-positive}
  \mu_{n_1+1}^{\omega_0+t\delta_{n_0}}(\{r_{k_0}(\omega_0)\})>0\qquad
  \forall t\in\reals\,.
\end{equation}
Let us denote by $\chi_{\mathcal{A}}$ the characteristic function of
$\mathcal{A}$, that is,
\begin{equation}
  \label{eq:def-characteristic}
  \chi_{\mathcal{A}}(\omega)=
  \begin{cases}
    1 & \text{ if }\omega\in\mathcal{A} \\
    0 & \text{ if }\omega\not\in \mathcal{A}\,.
  \end{cases}
\end{equation}
Since $\mu_{n_1+1}^{\omega}(\{r\})$ is a measurable function of
$\omega\in\Omega$ for any fixed $r\in\reals$ (see
\cite[Sec.\,5.3]{MR1102675}), we know that
$\mu_{n_1+1}^{\omega+t\delta_{n_0}+s\delta_{n_0+1}}(\{r\})$ is a
measurable function of $(t,s)\in \reals^2$ (see
\cite[Thm.\,7.5]{MR0924157}) for any fixed
$\omega\in\Omega$. Therefore, using (\ref{eq:property-of-r}), one
establishes that
\begin{equation*}
  \chi_{\mathcal{Q}^{r_{k_0}}}^{-1}(\{1\})=
\{(t,s)\in\mathbb{R}^2:\mu_{n_1+1}^{\omega+t\delta_{n_0}+s\delta_{n_0+1}}
(\{r_{k_0}(\omega)\})>0\}
\end{equation*}
is measurable. Hence
\begin{equation*}
  (t,s)\to\chi_{\mathcal{Q}^{r_{k_0}}}(\omega+t\delta_{n_0}+s\delta_{n_0+1})
\end{equation*}
is a measurable function for any fixed
$\omega\in\Omega$. Thus, by Fubini
\begin{equation*}
\begin{split}
  &\int_{\reals^2}
\chi_{\mathcal{Q}^{r_{k_0}}}(\omega+t\delta_{n_0}+s\delta_{n_0+1})
d(p_{n_o}\times p_{n_0+1})(t,s)\\
&=\int_{\reals}\left[\int_{\reals}
\chi_{\mathcal{Q}^{r_{k_0}}}(\omega+t\delta_{n_0}+s\delta_{n_0+1})
dp_{n_o}(t)\right] dp_{n_0+1}(s)\,.
\end{split}
\end{equation*}
The following equality holds
\begin{equation}
  \label{eq:int-equals-chi}
  \int_{\reals}
\chi_{\mathcal{Q}^{r_{k_0}}}(\omega+t\delta_{n_0}+s\delta_{n_0+1})
dp_{n_o}(t)=\chi_{\mathcal{Q}_2}(\omega+s\delta_{n_0+1})\,.
\end{equation}
When $\omega+s\delta_{n_0+1}\in\mathcal{Q}^{r_{k_0}}$,
(\ref{eq:int-equals-chi}) is verified using
(\ref{eq:at-least-one-one}), (\ref{eq:mu-positive}),
$p_{n_0}(\reals)=1$ and the continuity of $p_{n_0}$.
If $\omega+s\delta_{n_0+1}\not\in\mathcal{Q}^{r_{k_0}}$, then either
$\omega+t\delta_{n_0}+s\delta_{n_0+1}\not\in\mathcal{Q}^{r_{k_0}}$ for
every $t\in\reals$ and (\ref{eq:int-equals-chi}) follows, or there
exists $t_0\in\reals$ such that
$\omega+t_0\delta_{n_0}+s\delta_{n_0+1}\in\mathcal{Q}^{r_{k_0}}$. If
$\omega+t_0\delta_{n_0}+s\delta_{n_0+1}\in\mathcal{Q}_1$,
(\ref{eq:int-equals-chi}) follows from (\ref{eq:at-least-one-one}) and
continuity of $p_{n_0}$. The case
$\omega+t_0\delta_{n_0}+s\delta_{n_0+1}\in\mathcal{Q}_2$ is not
possible since (\ref{eq:mu-positive}) would imply
$\omega+s\delta_{n_0+1}\in\mathcal{Q}^{r_{k_0}}$.

Notice that $\mathcal{Q}_2$ does not need to be measurable and
nevertheless the equality (\ref{eq:int-equals-chi}) shows that
$\chi_{\mathcal{Q}_2}(\omega+s\delta_{n_0+1})$ is a measurable
function of $s$.
Hence
\begin{equation*}
  \int_{\reals^2}
\chi_{\mathcal{Q}^{r_{k_0}}}(\omega+t\delta_{n_0}+s\delta_{n_0+1})
d(p_{n_o}\times p_{n_0+1})(t,s)=
\int_{\reals}
\chi_{\mathcal{Q}_2}(\omega+s\delta_{n_0+1})
dp_{n_0+1}(s)=0
\end{equation*}
since the support of $\chi_{\mathcal{Q}_2}(\omega+s\delta_{n_0+1})$ is
only one point as a consequence of (\ref{eq:at-least-one-two}).
So we arrive at the conclusion that, for any fixed
$\omega\in\Omega$,
\begin{equation*}
  \mu_{n_1+1}^{\omega+t\delta_{n_0}+s\delta_{n_0+1}}
(\{r_{k_0}(\omega)\})=0
\end{equation*}
for $p_{n_o}\times p_{n_0+1}$-a. e. $(t,s)$. Note that, since
\begin{equation*}
  \sum_k\mu_{n_1+1}^{\omega}(\{r_{k}(\omega)\})
\ge\mu_{n_1+1}^{\omega}(\cup_kr_{k}(\omega))\,,
\end{equation*}
we actually have that
\begin{equation}
  \label{eq:mu-zero-ts}
  \mu_{n_1+1}^{\omega+t\delta_{n_0}+s\delta_{n_0+1}}
(\cup_kr_{k}(\omega))=0
\end{equation}
for any fixed $\omega\in\Omega$, for $p_{n_o}\times p_{n_0+1}$-a. e. $(t,s)$.

Now,
let $Q:=\{\omega\in\Omega:\mu_{n_1+1}^{\omega}
(\cup_kr_{k}(\omega))>0\}$ and assume that it is measurable. Then
\begin{equation*}
  \begin{split}
    &\mathbb{P}(Q)=\int_\Omega\chi_Q(\omega)d \mathbb{P}(\omega)\\
    &=\int_{\reals^{I\setminus\{n_0,n_0+1\}}}
    \left[\int_{\reals^2}
\chi_Q(\widetilde{\omega}+t\delta_{n_0}+s\delta_{n_0+1})d(p_{n_0}\times
      p_{n_0+1})(t,s)\right]\mathop{\times}_{n\in
      I\setminus\{n_0,n_0+1\}}
dp_n(\widetilde{\omega})\,,
  \end{split}
\end{equation*}
where $\omega=\widetilde{\omega}+t\delta_{n_0}+s\delta_{n_0+1}$ and we
have used Fubini's theorem. From (\ref{eq:mu-zero-ts})
and the definition of $Q$, we have
\begin{equation*}
  \chi_Q(\tilde{\omega}+t\delta_{n_0}+s\delta_{n_0+1})=0
\end{equation*}
for $p_{n_0}\times p_{n_0+1}$ a. e. $(t,s)$. Therefore $\mathbb{P}(Q)=0$.

Thus we have proven (\ref{eq:mu-is-zero}) with $\phi=\delta_{n_1+1}$. To
prove it for an arbitrary $\phi\in l^2(I,\complex)$ observe that
$\mu_\phi^\omega\prec\mu_{n_1+1}^\omega$ \cite[Sec.\,70
Thm.\,1]{MR1255973}.

B) The numbers $n_1,n_2$ are infinite.

It follows from \cite[Sec.\,72]{MR1255973} and
(\ref{eq:absolute-continuity}) (cf. \cite[Eq.\,2.141]{MR1711536}) that
$r$ is an eigenvalue of $H_\omega$ if and only if
$(\mu_{m}^\omega+\mu_{m+1}^\omega)(\{r\})>0$ for any
fixed $m\in\integers$. Thus, one can repeat the proof for A) with
$\mu_{m}^\omega+\mu_{m+1}^\omega$ instead of
$\mu_{n_1+1}^\omega$. Hence one proves that either
\begin{equation*}
  (\mu_{m}^\omega+\mu_{m+1}^\omega)(\cup_kr_k(\omega))=0
\end{equation*}
for $\mathbb{P}$ a. e. $\omega$, or the set of $\omega$ where the
equality above holds is not measurable. The proof is then completed by
recalling that, for all $\phi\in l^2(\integers,\complex)$,
$\mu_\phi^\omega\prec\mu_{m}^\omega+\mu_{m+1}^\omega$
(this follows as in the first part of the proof of \cite[Sec.\,70
Thm.\,1]{MR1255973} using \cite[Sec.\,72]{MR1255973}).
\end{proof}
\begin{theorem}
  \label{thm:measurable-function}
  Let $\{r_k\}_k$ be a finite or infinite sequence of measurable
  functions ($r_k:\Omega\to\reals$). The function $h:\Omega\to\reals$
  given by
  \begin{equation*}
    h(\omega):=\mu_\phi^\omega(\cup_kr_k(\omega))
  \end{equation*}
is measurable.
\end{theorem}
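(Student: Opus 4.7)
The plan is to reduce the measurability of $h$ to the joint Borel measurability of the two-variable map $G(\omega, r) := \mu_\phi^\omega(\{r\})$ on $\Omega \times \reals$, and then to use the fact that $\cup_k\{r_k(\omega)\}$ is countable in order to express $h$ as a sum of measurable compositions.

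First I would establish joint measurability of the distribution function $F(\omega, r) := \mu_\phi^\omega((-\infty, r]) = \inner{\phi}{E_{H_\omega}((-\infty, r])\phi}$. For each fixed $r$, the map $\omega \mapsto F(\omega, r)$ is measurable, by the same fact already invoked in the proof of Theorem~\ref{thm:spectral-measure-sequence} via \cite[Sec.\,5.3]{MR1102675}. For each fixed $\omega$, $F(\omega, \cdot)$ is right-continuous. Setting $q_n(r) := \lceil nr \rceil / n$, one has $q_n(r) \ge r$ and $q_n(r) \to r$, so
\begin{equation*}
  F(\omega, r) = \lim_{n\to\infty} \sum_{k\in\integers} F(\omega, k/n)\,\chi_{\{q_n(r) = k/n\}}\,.
\end{equation*}
The right-hand side is a pointwise limit of jointly measurable functions on $\Omega\times\reals$, each summand being a product of a function of $\omega$ alone and a function of $r$ alone. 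Hence $F$ is jointly measurable, and the same holds for $(\omega, r)\mapsto F(\omega, r - 1/m)$, obtained by composition with the measurable shift. Therefore
\begin{equation*}
  G(\omega, r) = F(\omega, r) - \lim_{m\to\infty} F(\omega, r - 1/m)
\end{equation*}
is jointly Borel measurable.

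With $G$ in hand, measurability of each $r_k$ makes $\omega \mapsto G(\omega, r_k(\omega)) = \mu_\phi^\omega(\{r_k(\omega)\})$ measurable by composition. To recover $h$ from these scalar functions, I decompose $\cup_k\{r_k(\omega)\}$ as a disjoint union by discarding from the $k$-th singleton any value already listed earlier:
\begin{equation*}
  h(\omega) = \sum_k \mu_\phi^\omega(\{r_k(\omega)\})\prod_{j<k}\chi_{\{r_k(\omega)\neq r_j(\omega)\}}\,,
\end{equation*}
with the empty product (for minimal $k$) interpreted as $1$. Each summand is measurable in $\omega$, since each indicator is measurable by virtue of the real function $r_k - r_j$ being measurable, and hence $h$ is measurable as a countable sum of measurable functions. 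I expect the only non-routine step to be the joint measurability of $F$, which combines measurability in $\omega$ of a single spectral projection with right-continuity in $r$; everything else is a direct composition argument together with the disjointification that turns the $\omega$-dependent countable union into an $\omega$-independent indexing.
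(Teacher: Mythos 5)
Your argument is correct, but it reaches the conclusion by a different key lemma than the paper. The paper first proves measurability of $\omega\mapsto\mu_\phi^\omega(\{s(\omega)\})$ for \emph{simple} functions $s$ (splitting $\Omega$ along the level sets $A_j$ and using measurability of $\omega\mapsto\inner{\phi}{E_{H_\omega}(\{\alpha_j\})\phi}$ for each fixed value $\alpha_j$), then approximates $r_1$ by simple functions, and finally handles the union by induction on $m$ via the dichotomy $r_{m+1}(\omega)\in\cup_{k\le m}r_k(\omega)$ or not, passing to a pointwise limit for infinite sequences. You instead prove joint $\mathcal{F}\otimes\mathcal{B}(\reals)$-measurability of $(\omega,r)\mapsto\mu_\phi^\omega(\{r\})$ by the standard Carath\'eodory-type device: the distribution function $F(\omega,r)=\inner{\phi}{E_{H_\omega}((-\infty,r])\phi}$ is measurable in $\omega$ for each fixed $r$ and right-continuous in $r$, hence jointly measurable as a limit along the grids $q_n(r)=\lceil nr\rceil/n$, and $\mu_\phi^\omega(\{r\})=F(\omega,r)-\lim_mF(\omega,r-1/m)$; composition with $\omega\mapsto(\omega,r_k(\omega))$ then gives measurability of each $\omega\mapsto\mu_\phi^\omega(\{r_k(\omega)\})$ at once, and your disjointified sum $\sum_k\mu_\phi^\omega(\{r_k(\omega)\})\prod_{j<k}\chi_{\{r_k(\omega)\ne r_j(\omega)\}}$ is just an explicit closed form of the paper's inductive recombination. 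Both routes consume essentially the same external input (measurability in $\omega$ of $\inner{\phi}{E_{H_\omega}(\Delta)\phi}$ for fixed Borel $\Delta$, from Carmona--Lacroix), but your joint-measurability lemma buys robustness: the composition step is immediate and sidesteps the limit passage in the paper's simple-function approximation, where one must justify convergence of $\mu_\phi^\omega(\{s_n(\omega)\})$ to $\mu_\phi^\omega(\{r_1(\omega)\})$ even though $r\mapsto\mu_\phi^\omega(\{r\})$ is not continuous; the paper's route, on the other hand, stays entirely at the level of singletons and avoids any two-variable measurability considerations.
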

\begin{proof}
  Consider a simple function
  $s(\omega)=\sum_{j=1}^N\alpha_j\chi_{A_j}(\omega)$, where
  $\chi_{A_j}(\omega)$ is the characteristic function of $A_j$ (see
  (\ref{eq:def-characteristic})). Note that $A_j=s^{-1}(\{\alpha_j\})$
  and the sets $\{A_j\}_{j=1}^N$ form a partition of $\Omega$.

  Let $V\subset\reals$ be an open set. The set
  \begin{equation*}
    A:=\{\omega\in\Omega:\inner{\phi}{E_{H_\omega}(\{s(\omega)\})\phi}\in
    V\}
  \end{equation*}
is measurable. Indeed,
\begin{equation*}
  A=\cup_{j=1}^N\left[A_j\cap
\{\omega\in\Omega:\inner{\phi}{E_{H_\omega}(\{\alpha_j\})\phi}
\in V\}\right]
\end{equation*}
and each
$\{\omega\in\Omega:\inner{\phi}{E_{H_\omega}(\{\alpha_j\})\phi} \in
V\}$ is measurable (cf. the commentary after
\cite[Prop.\,V.3.1]{MR1102675}). Thus, the function
$\mu_\phi^\omega(s(\omega))$ is measurable. We approximate the
measurable function $r_1(\omega)$ by simple functions to obtain the
assertion of the theorem for $r_1(\omega)$.

Now, suppose that
 \begin{equation*}
    h_m(\omega):=\mu_\phi^\omega(\cup_{k=1}^mr_k(\omega))
  \end{equation*}
is a measurable function. Clearly,
\begin{equation*}
  h_{m+1}(\omega)=
  \begin{cases}
     h_m(\omega) & r_{m+1}(\omega)\in \cup_{k=1}^mr_k(\omega)\\
     h_m(\omega)+\mu_\phi^\omega(r_{m+1}(\omega)) & \text{otherwise}.
  \end{cases}
\end{equation*}
So from the
measurability of $h_m(\omega)$ and $\mu_\phi^\omega(r_{m+1}(\omega))$,
the measurability of $h_{m+1}(\omega)$ follows. By induction we prove
the assertion of the theorem for any finite sequence of measurable
functions $\{r_k\}_k$. The case of an infinite sequence is proven by
taking a pointwise limit w.r.t. $\omega\in\Omega$ of $h_m(\omega)$
when $m$ tends to $\infty$.
\end{proof}
Let $\sigma_p(H_\omega)$ denote the set of eigenvalues of the operator
$H_\omega$.
\begin{corollary}
  \label{cor:measurability-spectrum}
  If $H_\omega$ is measurable \cite[Def.\,V.3.1]{MR1102675}, then
  $h(\omega):=\mu_\phi^\omega(\sigma_p(H_\omega))$ is a measurable
  function.
\end{corollary}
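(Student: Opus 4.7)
The plan is to reduce Corollary~\ref{cor:measurability-spectrum} to Theorem~\ref{thm:measurable-function} by exhibiting a countable family $\{r_k\}_k$ of measurable functions $r_k:\Omega\to\reals$ such that $\sigma_p(H_\omega)=\cup_k r_k(\omega)$ for every $\omega\in\Omega$. Once this is produced, $h(\omega)=\mu_\phi^\omega(\sigma_p(H_\omega))=\mu_\phi^\omega(\cup_k r_k(\omega))$ is measurable by Theorem~\ref{thm:measurable-function} applied to $\phi$. Since $l^2(I,\complex)$ is separable, $\sigma_p(H_\omega)$ is at most countable for every $\omega$, so such an enumeration exists set-theoretically; the entire content of the proof is to make the selection measurable in $\omega$.

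The key tool is the standard consequence of $H_\omega$ being measurable in the sense of \cite[Def.\,V.3.1]{MR1102675}, namely that $\omega\mapsto\inner{\psi}{E_{H_\omega}(B)\psi}$ is measurable for every fixed $\psi\in l^2(I,\complex)$ and every Borel $B\subset\reals$; this is the same fact used in the proof of Theorem~\ref{thm:measurable-function}. Applied to $\psi=\delta_j$, it gives that $\omega\mapsto\mu_j^\omega(B)$ is measurable for each $j\in I$ and each Borel $B$. Since $\lambda\in\sigma_p(H_\omega)$ if and only if $\mu_j^\omega(\{\lambda\})>0$ for some $j\in I$, it is enough to enumerate the atoms of each probability measure $\mu_j^\omega$ as a sequence of measurable functions of $\omega$.

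I would produce such an enumeration by a rational-interval exhaustion. Fix $j\in I$, $n\in\nats$, and rationals $p<q$; since $\mu_j^\omega(\reals)=1$, the measure $\mu_j^\omega$ has at most $n$ atoms of mass $\ge 1/n$ in $(p,q)$. The set $\{\omega:\mu_j^\omega\text{ has an atom of mass }\ge 1/n\text{ in }(p,q)\}$ is expressible as a countable Boolean combination (indexed by rational sub-intervals of $(p,q)$) of the measurable sets $\{\omega:\mu_j^\omega(I)\ge 1/n\}$, and is therefore measurable; on this set, the rightmost atom of mass $\ge 1/n$ in $(p,q)$ is itself a measurable function of $\omega$, its level sets being describable through analogous rational-interval combinations. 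Iterating to peel off the next atom, then letting $(p,q)$ range over rational intervals, $n$ over $\nats$, and $j$ over $I$, yields the required countable family $\{r_k\}_k$. The main obstacle is precisely this measurable selection of atoms of a random probability measure; it can alternatively be packaged as an application of the Kuratowski--Ryll-Nardzewski measurable selection theorem together with a Castaing representation, applied to the closed-valued measurable multifunction $\omega\mapsto\overline{\sigma_p(H_\omega)\cap[-N,N]}$ with $N\in\nats$. Once $\{r_k\}_k$ is in hand, Theorem~\ref{thm:measurable-function} concludes the proof.
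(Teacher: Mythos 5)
Your reduction is exactly the paper's: produce a countable measurable enumeration $\{r_k(\omega)\}_k$ of $\sigma_p(H_\omega)$ and apply Theorem~\ref{thm:measurable-function}; the only difference is that the paper obtains the enumeration in one stroke by citing the measurable-enumeration result of \cite{MR1690090}, whereas you construct it by hand from the atoms of the measures $\mu_j^\omega$ via rational-interval exhaustion (or a Kuratowski--Ryll-Nardzewski selection). Your construction is sound, and the small loose ends (atoms falling on rational endpoints, how to define $r_k$ on the measurable set where no atom of the prescribed mass exists) are harmless because any surplus non-eigenvalue points form a countable set of zero $\mu_\phi^\omega$-mass, so the proposal is correct and essentially the same proof, merely self-contained where the paper cites.
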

\begin{proof}
  Since the operator $H_\omega$ is measurable, we can apply a result
  of \cite{MR1690090} and give a measurable enumeration of the points
  in $\sigma_p(H_\omega)$. Then the assertion follows from
  Theorem~\ref{thm:measurable-function}.
\end{proof}
\section{Applications to spectral theory}
We begin this section by stating an elementary result.
\begin{lemma}
  \label{lem:equivalence}
Let $\mu$ be a measure on $X$ and let
\begin{equation*}
  \gamma(\Delta):=\int_\Delta f(\lambda)d\mu(\lambda)\,,
\end{equation*}
where $f$ is a non-negative measurable function. Then
\begin{equation*}
  \gamma\sim\mu\quad\iff\quad\mu(\{\lambda\in X:f(\lambda)=0\})=0\,.
\end{equation*}
\end{lemma}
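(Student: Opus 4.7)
The plan is to prove both implications by elementary manipulations, relying on the standard observation that $\gamma \prec \mu$ holds automatically (if $\mu(\Delta)=0$, then the integral $\int_\Delta f\,d\mu$ vanishes), so the whole content of the equivalence $\gamma\sim\mu$ reduces to checking the reverse absolute continuity $\mu \prec \gamma$.

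First I would set $N:=\{\lambda\in X:f(\lambda)=0\}$ and dispose of the direction $(\Rightarrow)$, which is essentially a one-liner. Since $f\equiv 0$ on $N$, the definition of $\gamma$ gives $\gamma(N)=\int_N f\,d\mu = 0$. Assuming $\gamma\sim\mu$, in particular $\mu\prec\gamma$, so $\mu(N)=0$.

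For the direction $(\Leftarrow)$, assume $\mu(N)=0$ and take any measurable $\Delta$ with $\gamma(\Delta)=0$; I need to conclude $\mu(\Delta)=0$. Since $f\ge 0$, the vanishing of $\int_\Delta f\,d\mu$ forces $f=0$ holding $\mu$-a.e. on $\Delta$, which means $\mu(\Delta\cap\{f>0\})=0$. Combining this with the decomposition
\begin{equation*}
\mu(\Delta) = \mu(\Delta\cap N) + \mu(\Delta\cap\{f>0\}) \le \mu(N) + \mu(\Delta\cap\{f>0\}) = 0,
\end{equation*}
yields $\mu(\Delta)=0$, hence $\mu\prec\gamma$. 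Together with the automatic $\gamma\prec\mu$, this gives $\gamma\sim\mu$.

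There is no real obstacle here; the only thing one has to be careful about is justifying the step ``$\int_\Delta f\,d\mu=0$ with $f\ge 0$ implies $f=0$ $\mu$-a.e.\ on $\Delta$'', which follows at once by writing $\{f>0\}\cap\Delta=\bigcup_{n\in\nats}(\{f>1/n\}\cap\Delta)$ and applying Chebyshev's inequality $\mu(\{f>1/n\}\cap\Delta)\le n\int_\Delta f\,d\mu=0$ to each piece.
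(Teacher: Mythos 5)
Your proof is correct and follows essentially the same route as the paper: the direction $(\Leftarrow)$ uses the automatic absolute continuity $\gamma\prec\mu$ plus the fact that $\gamma(\Delta)=0$ forces $f=0$ $\mu$-a.e.\ on $\Delta$ and then the same decomposition of $\mu(\Delta)$, while your $(\Rightarrow)$ is just the contrapositive of the paper's argument. The extra Chebyshev justification is a harmless elaboration of a step the paper takes for granted.
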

\begin{proof}
  ($\Leftarrow$) $\gamma$ is absolutely continuous w.r.t $\mu$ by
  definition. Now, assume $\gamma(\Delta)=0$, then $f(\lambda)=0$ for
  $\mu$-a. e. $\lambda$ on $\Delta$ and
  \begin{equation*}
    \mu(\Delta)=\mu(\Delta\setminus\{\lambda\in X:f(\lambda)=0\}) +
    \mu(\{\lambda\in X:f(\lambda)=0\})=0\,.
  \end{equation*}
  ($\Rightarrow$) If $\mu(\{\lambda\in X:f(\lambda)=0\})>0$, then
  $\gamma(\{\lambda\in X:f(\lambda)=0\})=0$, so the measures are not
  equivalent.
 \end{proof}
\begin{theorem}
  \label{thm:equivalence-j-l}
  Assume that at least one of the numbers $n_1,n_2$ is finite and that
  $I$ contains at least three integers. Fix any $n,m\in I$. It turns
  out that, for $\mathbb{P}$-a. e. $\omega$,
  \begin{equation*}
    \mu_{n}^{\omega}\sim\mu_{m}^{\omega}\,.
  \end{equation*}
\end{theorem}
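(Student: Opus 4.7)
The plan is, for a fixed pair $n,m\in I$, to show that each of $\mu_n^\omega$ and $\mu_m^\omega$ is $\mathbb{P}$-a.s.\ equivalent to a common reference measure, so that $\mu_n^\omega\sim\mu_m^\omega$ a.s.\ follows by transitivity. Assume for concreteness that $n_1$ is finite; I will use $\mu_{n_1+1}^\omega$ as the reference. (The case in which only $n_2$ is finite is entirely symmetric, with $\mu_{n_2-1}^\omega$ and the polynomials $c_{n_2}(\cdot,k)$ playing the analogous roles.)

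The first step is equation~(\ref{eq:mu-n-mu-semi}), which already yields $\mu_n^\omega\prec\mu_{n_1+1}^\omega$ with density $s_{n_1+1}^2(\cdot,n)$. By Lemma~\ref{lem:equivalence}, the equivalence I want is precisely
\begin{equation*}
\mu_{n_1+1}^\omega\bigl(\{\lambda\in\reals:s_{n_1+1}(\lambda,n)=0\}\bigr)=0\,.
\end{equation*}
The zero set is a finite family $\{r_k(\omega)\}_k$ of measurable functions of $\omega$ (as remarked at the end of Section~\ref{sec:preliminaries}), and inspecting the recurrence that defines $s_{n_1+1}(z,\cdot)$ shows that each $r_k$ depends only on $\omega(n_1+1),\ldots,\omega(n-1)$. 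I will then invoke Theorem~\ref{thm:spectral-measure-sequence} applied to this family, with Theorem~\ref{thm:measurable-function} supplying the measurability needed to exclude the non-measurable alternative in the conclusion of Theorem~\ref{thm:spectral-measure-sequence}.

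The substantive step is to produce a pair $n_0,n_0+1\in I$ lying outside $\{n_1+1,\ldots,n-1\}$, so that hypothesis (\ref{eq:property-of-r}) of Theorem~\ref{thm:spectral-measure-sequence} holds. Whenever $n+1\in I$ (in particular whenever $n_2=+\infty$, or $n_2$ is finite and $n\le n_2-2$) the choice $n_0:=n$ does the job at once. The one case that escapes is $n=n_2-1$ with $n_2$ finite, and this is where I expect the main obstacle to lie, since then $I\setminus\{n_1+1,\ldots,n-1\}=\{n_2-1\}$ cannot contain two consecutive integers. I plan to resolve it by chaining through the intermediate index $n_1+2$: the second line of~(\ref{eq:mu-n-mu-semi}) gives
\begin{equation*}
\mu_{n_1+2}^\omega(\Delta)=\int_\Delta c_{n_2}^2(\lambda,n_1+2)\,d\mu_{n_2-1}^\omega(\lambda)\,,
\end{equation*}
whose integrand has zeros depending only on $\omega(n_1+3),\ldots,\omega(n_2-1)$; choosing $n_0:=n_1+1$ and $n_0+1:=n_1+2$ (both in $I$ because $|I|\ge 3$) makes Theorem~\ref{thm:spectral-measure-sequence} applicable and yields $\mu_{n_1+2}^\omega\sim\mu_{n_2-1}^\omega$ a.s. Combining this with $\mu_{n_1+2}^\omega\sim\mu_{n_1+1}^\omega$, which the main argument delivers since $n_1+2\le n_2-2$ whenever $|I|\ge 3$, produces $\mu_{n_2-1}^\omega\sim\mu_{n_1+1}^\omega$ a.s. Intersecting the two full-measure sets associated with $n$ and $m$ then yields $\mu_n^\omega\sim\mu_m^\omega$ a.s., as required.
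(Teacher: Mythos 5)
Your proposal is correct and follows essentially the same route as the paper: reduce to the vanishing of $\mu_{n_1+1}^\omega$ (resp.\ $\mu_{n_2-1}^\omega$) on the zero sets of the polynomials in (\ref{eq:mu-n-mu-semi}) via Lemma~\ref{lem:equivalence}, then apply Theorems~\ref{thm:spectral-measure-sequence} and \ref{thm:measurable-function} with a pair $n_0,n_0+1$ on which the roots do not depend, and chain through an intermediate index for the boundary case $n=n_2-1$. The only (immaterial) difference is that you chain through $n_1+2$ using the zeros of $c_{n_2}(\cdot,n_1+2)$ with $n_0=n_1+1$, whereas the paper chains through $n_2-2$ using the single root of $c_{n_2}(\cdot,n_2-2)$ with $n_0<n_2-2$.
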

\begin{proof}
  Let $n_1>-\infty$.  Under this assumption we
  proceed stepwise. Firstly, we show that
  $\mu_{n}^{\omega}\sim\mu_{n_1+1}^{\omega}$ for $n_1<
  n<n_2-1$. Secondly, it is proven that
  $\mu_{n_2-2}^{\omega}\sim\mu_{n_2-1}^{\omega}$ when $n_2$ is finite.

  In view of the first equation in (\ref{eq:mu-n-mu-semi}),
  $\mu_{n}^{\omega}\sim\mu_{n_1+1}^{\omega}$ if and only if (see Lemma
  \ref{lem:equivalence})
  \begin{equation*}
    \mu_{n_1+1}^{\omega}(\{\lambda:s_{n_1}(\lambda,n)=0\})=0\,,\qquad
  \end{equation*}
  for $\mathbb{P}$-a. e. $\omega$. Due to the initial conditions
  (\ref{eq:initial-c}) and (\ref{eq:initial-s}), it is straightforward
  to verify that the polynomial $s_{n_1+1}(\lambda,n)$ is completely
  determined by the sequences $\{a(k)\}_{k=n_1+1}^{n-1}$ and
  $\{\omega(k)\}_{k=n_1+1}^{n-1}$. Now, the finite sequence
  $\{\lambda_k(\omega)\}_k$ of zeros of $s_{n_1+1}(\lambda,n)$
  satisfies the conditions imposed on the sequence
  $\{r_k(\omega)\}_k$ in the statement of
  Theorem~\ref{thm:spectral-measure-sequence} when $n_0\ge n$. By
  applying Theorem~\ref{thm:spectral-measure-sequence} and
  \ref{thm:measurable-function}, one completes the first step. Now,
  suppose that $n_2$ is finite, and use the second equation in
  (\ref{eq:mu-n-mu-semi}) to express $\mu_{n_2-2}^{\omega}$. The
  polynomial involved here, $c_{n_2}(\lambda,n_2-2)$, is completely
  determined by $a(n_2-2)$ and $\omega(n_2-1)$. The only root of this
  polynomial, satisfies the conditions imposed on the sequence
  $\{r_k(\omega)\}_k$ in
  Theorem~\ref{thm:spectral-measure-sequence} taking $n_0<n_2-2$.

  The statement of the theorem is completely proven after noticing
  that, when $n_1$ is not finite, one repeats the reasoning above,
  with $n_1$, $n_2$, $s_{n_1+1}(\lambda,n)$, $c_{n_2}(\lambda,n_2-2)$
  replaced by $n_2$, $n_1$, $c_{n_2}(\lambda,n)$,
  $s_{n_1+1}(\lambda,n_1+2)$, respectively.
\end{proof}
\begin{remark}
  Theorem \ref{thm:equivalence-j-l} is proven in \cite{MR1779620} for
  the case of absolutely continuous probability distributions in a
  more general setting. Our approach is different. In particular we do
  not need Poltoratskii's theorem used in \cite{MR1779620}.
\end{remark}
\begin{remark}
  \label{rem:examples}
  One may construct self-adjoint Jacobi operators for which
  $\mu_{n_1+1}^{\omega}\not\sim\mu_{n_1+2n}^{\omega}$ for all
  $n\in\nats$ and fixed $\omega$. Indeed, as mentioned in
  \cite[Example\,1]{MR1971777} for $n_1$ finite and $n_2$ infinite,
  there are self-adjoint Jacobi matrices such that
  $\mu_{n_1+1}^{\omega}(\{0\})\ne 0$ and $s_{n_1+1}(0,n_1+2n)=0$. On
  the other hand, there exist Jacobi operators for which
  $\mu_{n}^{\omega}\sim\mu_{m}^{\omega}$  when $n$ and
  $m$ are sufficiently big. This is the case of the self-adjoint
  Jacobi operator studied in \cite{naboko} (see the proof of Corollary
  5.2 in \cite{naboko}).
\end{remark}
We now turn to the case, when neither of the numbers $n_1,n_2$ is
finite. Observe that by inserting (\ref{eq:absolute-continuity}) into
(\ref{eq:mu-n-mu-matrix}) one has
\begin{equation}
  \label{eq:mu-n-g-m-tr}
  \mu_{n}^{\omega}(\Delta)=\int_\Delta g_{(m,n)}(\lambda)
d(\mu_{m}^{\omega}+\mu_{m+1}^{\omega})(\lambda)\,,
\end{equation}
where
\begin{equation}
  \label{eq:def-g}
 g_{(m,n)}(\lambda):= \inner{\boldsymbol{R}_m(\lambda)
\begin{pmatrix}
c_{m+1}(\lambda,n)\\
s_{m+1}(\lambda,n)
\end{pmatrix}
}{
\begin{pmatrix}
c_{m+1}(\lambda,n)\\
s_{m+1}(\lambda,n)
\end{pmatrix}
}_{\complex^2}
\end{equation}
\begin{theorem}
  Assume that neither of the numbers $n_1,n_2$ is
finite. Fix any
  $k,l,m,n\in\integers$. For $\mathbb{P}$-a. e. $\omega$,
  \begin{equation*}
   \mu_{k}^{\omega}+\mu_{l}^{\omega}\sim\mu_{m}^{\omega}+\mu_{n}^{\omega}\,.
  \end{equation*}
\end{theorem}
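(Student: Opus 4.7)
The plan is to compare every sum $\mu_k^\omega+\mu_l^\omega$ against the canonical consecutive sum $\mu_0^\omega+\mu_1^\omega$ and then invoke Remark~\ref{rem:equivalence}. Concretely, I aim to show that, for every pair $k,l\in\integers$ with $k\ne l$, $\mu_k^\omega+\mu_l^\omega\sim\mu_0^\omega+\mu_1^\omega$ for $\mathbb{P}$-a.\,e.~$\omega$; applying this to both $(k,l)$ and $(m,n)$ and using transitivity then delivers the claim.

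Combining (\ref{eq:mu-n-g-m-tr}) with $m=0$ for $n\in\{k,l\}$ gives
\begin{equation*}
(\mu_k^\omega+\mu_l^\omega)(\Delta)=\int_\Delta\bigl(g_{(0,k)}(\lambda)+g_{(0,l)}(\lambda)\bigr)\,d(\mu_0^\omega+\mu_1^\omega)(\lambda)\,,
\end{equation*}
so by Lemma~\ref{lem:equivalence} the desired equivalence reduces to showing that the set $Z_\omega:=\{\lambda:g_{(0,k)}(\lambda)+g_{(0,l)}(\lambda)=0\}$ is $(\mu_0^\omega+\mu_1^\omega)$-null for $\mathbb{P}$-a.\,e.~$\omega$.

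The key geometric observation is that $\boldsymbol{R}_0(\lambda)$ is positive semi-definite with trace $a_0(\lambda)+(1-a_0(\lambda))=1$ for $(\mu_0^\omega+\mu_1^\omega)$-a.\,e.~$\lambda$, hence non-zero and with at most one-dimensional kernel. Since $g_{(0,n)}(\lambda)$ is the quadratic form of $\boldsymbol{R}_0(\lambda)$ evaluated on $(c_1(\lambda,n),s_1(\lambda,n))^T$, vanishing of $g_{(0,k)}+g_{(0,l)}$ forces both vectors $(c_1(\lambda,k),s_1(\lambda,k))^T$ and $(c_1(\lambda,l),s_1(\lambda,l))^T$ into $\ker\boldsymbol{R}_0(\lambda)$ and therefore to be collinear in $\complex^2$. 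For $k\ne l$ this collinearity is the polynomial condition
\begin{equation*}
p(\lambda):=c_1(\lambda,k)\,s_1(\lambda,l)-s_1(\lambda,k)\,c_1(\lambda,l)=0\,,
\end{equation*}
so that, up to a $(\mu_0^\omega+\mu_1^\omega)$-null set, $Z_\omega$ is contained in the finite zero set of $p$.

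The coefficients of $p$ are polynomials in the $a(n)$'s and $\omega(n)$'s for $n$ in a bounded window $[N_1,N_2]$ determined by $k$ and $l$, so the roots of $p$ form a finite sequence of measurable functions of $\omega$. Picking any $n_0\in\integers$ with $n_0,n_0+1\notin[N_1,N_2]$ makes these roots satisfy condition~(\ref{eq:property-of-r}), and Theorem~\ref{thm:spectral-measure-sequence} (case B, stated precisely for $\mu_m^\omega+\mu_{m+1}^\omega$) with $m=0$ then yields $(\mu_0^\omega+\mu_1^\omega)(Z_\omega)=0$ for $\mathbb{P}$-a.\,e.~$\omega$. The main obstacle in this plan is the algebraic-geometric step that replaces the vanishing of the density $g_{(0,k)}+g_{(0,l)}$ by a polynomial condition in $\lambda$; it hinges on the unit-trace positive semi-definite structure of $\boldsymbol{R}_0$ and breaks down in the degenerate case $k=l$, which would assert that $\mu_k^\omega$ itself is a maximal spectral measure and require separate arguments ruling out eigenvectors of $H_\omega$ vanishing at $k$.
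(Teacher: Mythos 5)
Your argument is correct in substance but takes a genuinely different route from the paper's. The paper never compares an arbitrary pair with a fixed reference pair: it keeps one index in common, proving $\mu_m^\omega+\mu_n^\omega\sim\mu_m^\omega+\mu_{m+1}^\omega$ a.\,e.\ via (\ref{eq:mu-n-g-m-tr}) together with the algebraic identity that on $\{g_{(m,m)}=g_{(m,n)}=0\}$ one has $a_m(\lambda)=0$ and hence $g_{(m,n)}(\lambda)=s_{m+1}^2(\lambda,n)$, so the exceptional set lies in the zero set of the single polynomial $s_{m+1}(\cdot,n)$; the consecutive pairs $(k,k+1)$ and $(m,m+1)$ are then identified for every $\omega$ by Remark~\ref{rem:equivalence}. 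You instead compare $(k,l)$ directly with $(0,1)$, using the positive semidefinite, unit-trace structure of $\boldsymbol{R}_0(\lambda)$ to force both vectors $(c_1(\lambda,k),s_1(\lambda,k))$ and $(c_1(\lambda,l),s_1(\lambda,l))$ into a common one-dimensional kernel, replacing the paper's condition $s_{m+1}(\lambda,n)=0$ by the collinearity polynomial $p(\lambda)$; the finish (roots as finitely many measurable functions of a finite window of $\omega$-entries, $n_0$ chosen outside that window, Theorems~\ref{thm:spectral-measure-sequence} and \ref{thm:measurable-function}) is the same machinery, and Remark~\ref{rem:equivalence} becomes redundant in your scheme. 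Two points to tighten. First, you assert without proof that $p$ has a finite zero set, i.e.\ $p\not\equiv 0$; this is true but not by naive degree count, since the top-degree terms of $c_1(\lambda,k)s_1(\lambda,l)$ and $s_1(\lambda,k)c_1(\lambda,l)$ cancel. A clean argument: for fixed $\lambda$ the sequence $j\mapsto c_1(\lambda,k)s_1(\lambda,j)-s_1(\lambda,k)c_1(\lambda,j)$ solves $\tau u=\lambda u$, vanishes at $j=k$ and equals the nonzero constant $a(0)/a(k)$ at $j=k+1$ (constancy of the Wronskian), so its value at $j=l$, which is $p(\lambda)$, is a polynomial of exact degree $\abs{l-k}-1$. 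Avoiding this extra verification is precisely what the paper's share-an-index trick buys, since $s_{m+1}(\cdot,n)\not\equiv0$ is immediate. Second, your exclusion of $k=l$ is not a loss relative to the paper: the paper's own computation degenerates when $m=n$ (there $s_{m+1}(\lambda,m)\equiv 0$ and the finiteness of $\mathcal{B}$ fails), and admitting $k=l$ would amount to $\mu_k^\omega\sim\mu_k^\omega+\mu_{k+1}^\omega$ a.\,e., essentially the stronger Jak\v{s}i\'{c}--Last type statement that the paper claims only for absolutely continuous distributions. Finally, to close the ``either \dots or not measurable'' dichotomy in Theorem~\ref{thm:spectral-measure-sequence} you should, as the paper does, invoke Theorem~\ref{thm:measurable-function} explicitly for the roots of $p$.
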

\begin{proof}
  It follows from (\ref{eq:mu-n-g-m-tr}) that
  \begin{equation}
    \label{eq:mu-n-mu-m-ac}
   (\mu_{m}^{\omega}+\mu_{n}^{\omega})(\Delta)=\int_\Delta
   \left(g_{(m,m)}(\lambda)+g_{(m,n)}(\lambda)\right)
d(\mu_{m}^{\omega}+\mu_{m+1}^{\omega})(\lambda)\,.
  \end{equation}
  Let us show that
  $\mu_{m}^{\omega}+\mu_{n}^{\omega}\sim\mu_{m}^{\omega}+\mu_{m+1}^{\omega}$
  for $\mathbb{P}$-a. e. $\omega$. Due to (\ref{eq:mu-n-mu-m-ac}) and
  Lemma~\ref{lem:equivalence}, this will be done if one proves that
\begin{equation*}
  (\mu_{m}^{\omega}+\mu_{m+1}^{\omega})
  (\mathcal{B})=0\quad\text{for}\ \mathbb{P}\text{-}a. e.\omega\,,
\end{equation*}
where $\mathcal{B}:=\{\lambda:g_{(m,m)}(\lambda)=g_{(m,n)}(\lambda)=0\}$.

Observing that $g_{(m,n)}(\lambda)=0$ implies
\begin{equation*}
  \boldsymbol{R}_m(\lambda)\begin{pmatrix}
c_{m+1}(\lambda,n)\\
s_{m+1}(\lambda,n)
\end{pmatrix}=0\,,
\end{equation*}
we obtain
\begin{equation}
  \label{eq:a-b}
  b_m(\lambda)c_{m+1}(\lambda,n)s_{m+1}(\lambda,n)=
-a_m(\lambda)c_{m+1}^2(\lambda,n)
\end{equation}
for any $n,m\in\integers$. On the other hand, (\ref{eq:def-g}) and
(\ref{eq:initial-c}), (\ref{eq:initial-s}) imply
$g_{(m,m)}(\lambda)=a_m(\lambda)$. From (\ref{eq:def-g}) and
(\ref{eq:a-b}), it follows that 
\begin{equation*}
  g_{(m,n)}(\lambda)=s_{m+1}^2(\lambda,n)
-a_m(\lambda)(s_{m+1}^2\left(\lambda,n)+c_{m+1}^2(\lambda,n)\right)\,.
\end{equation*}
 So, assuming that
$g_{(m,m)}(\lambda)=g_{(m,n)}(\lambda)=0$ one obtains 
$g_{(m,n)}(\lambda)=s_{m+1}^2(\lambda,n)$. This implies that the set
$\mathcal{B}$ is finite and its elements satisfy the conditions
imposed on the elements of the sequence $\{r_k(\omega)\}_k$ used in
Theorem \ref{thm:spectral-measure-sequence}. That theorem and Theorem
\ref{thm:measurable-function} yield that
$\mu_{m}^{\omega}+\mu_{n}^{\omega}\sim\mu_{m}^{\omega}+\mu_{m+1}^{\omega}$. Now,
the claim of the theorem follows from Remark~\ref{rem:equivalence}.
\end{proof}
\begin{remark}
  In the case of absolutely continuous distributions, it is proven in
  \cite{MR1779620} the stronger statement $\mu_m^\omega\sim\mu_n^\omega$ for
  $\mathbb{P}$-a. e. $\omega$ and any $m,n\in\integers$.
\end{remark}
\begin{theorem}
\label{thm:matrix-eigenvalues-submatrix}
   Consider an interval $\tilde{I}$ such that
  $\widetilde{I}\subset I\setminus\{m,m+1\}$, where $n_1+1\le m\le
  n_2-2$. Let $H_\omega$ be the operator defined in Section
  \ref{sec:preliminaries} in $l^2(I,\complex)$ and
  $\widetilde{H}_\omega$ the operator defined analogously in
  $l^2(\widetilde{I},\complex)$. Then,
  \begin{equation*}
   \mathbb{P}(\{\omega\in\Omega:\sigma_p(H_\omega)\cap
\sigma_p(\widetilde{H}_\omega)\ne\emptyset\})=0\,.
  \end{equation*}
\end{theorem}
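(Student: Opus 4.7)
The plan is to convert the event $\{\omega:\sigma_p(H_\omega)\cap\sigma_p(\widetilde{H}_\omega)\ne\emptyset\}$ into a statement about the spectral measure of $H_\omega$ being positive on the (countable) set $\sigma_p(\widetilde{H}_\omega)$, and then to invoke Theorems~\ref{thm:spectral-measure-sequence} and \ref{thm:measurable-function} with $n_0:=m$.

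First, I would fix a vector $\phi$ whose spectral measure for $H_\omega$ detects all eigenvalues of $H_\omega$. When $n_1$ or $n_2$ is finite, Remark~\ref{rem:simplcity-multiplicity} lets me take $\phi=\delta_{n_1+1}$ or $\phi=\delta_{n_2-1}$, so that $\lambda\in\sigma_p(H_\omega)$ if and only if $\mu_{\phi}^\omega(\{\lambda\})>0$. In the two-sided infinite case, exactly as in case B of the proof of Theorem~\ref{thm:spectral-measure-sequence}, the role of $\mu_\phi^\omega$ is played by $\mu_m^\omega+\mu_{m+1}^\omega$: a real $r$ is an eigenvalue of $H_\omega$ iff $(\mu_m^\omega+\mu_{m+1}^\omega)(\{r\})>0$.

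Next, I would produce a measurable enumeration $\{r_k(\omega)\}_k$ of $\sigma_p(\widetilde{H}_\omega)$. The self-adjoint operator $\widetilde{H}_\omega$ on $l^2(\widetilde{I},\complex)$ is measurable in $\omega$, hence by the result of \cite{MR1690090} used in Corollary~\ref{cor:measurability-spectrum} its eigenvalues admit a measurable enumeration. Because $\widetilde{I}\subset I\setminus\{m,m+1\}$, the matrix of $\widetilde{H}_\omega$, and therefore each $r_k$, depends only on $\omega(n)$ for $n\in I\setminus\{m,m+1\}$. Taking $n_0:=m$, this is precisely the invariance (\ref{eq:property-of-r}), and the assumption $n_1+1\le m\le n_2-2$ ensures that $n_0,n_0+1\in I$, so Theorem~\ref{thm:spectral-measure-sequence} is applicable.

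Combining Theorem~\ref{thm:measurable-function} (which gives the measurability of $\omega\mapsto\mu_\phi^\omega(\cup_k r_k(\omega))$, and similarly for $\mu_m^\omega+\mu_{m+1}^\omega$) with Theorem~\ref{thm:spectral-measure-sequence} yields $\mu_\phi^\omega(\sigma_p(\widetilde{H}_\omega))=0$, respectively $(\mu_m^\omega+\mu_{m+1}^\omega)(\sigma_p(\widetilde{H}_\omega))=0$, for $\mathbb{P}$-a.e.\ $\omega$. By the characterization from the first step, a common eigenvalue $\lambda\in\sigma_p(H_\omega)\cap\sigma_p(\widetilde{H}_\omega)$ would force that spectral mass to be strictly positive, so the probability of having any such $\lambda$ is zero; the event in the statement is measurable because it coincides with the measurable event that this mass is positive. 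The main obstacle I expect is setting up the measurable enumeration of $\sigma_p(\widetilde{H}_\omega)$ and verifying that each $r_k$ genuinely depends on $\omega$ only through coordinates in $\widetilde{I}$; once this is in place, the theorem is an almost immediate consequence of the two main results applied with $n_0=m$.
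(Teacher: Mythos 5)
Your proposal is correct and follows essentially the same route as the paper: you note that $\sigma_p(\widetilde{H}_\omega)$ is independent of $\omega(m),\omega(m+1)$, use the measurable enumeration (as in Corollary~\ref{cor:measurability-spectrum}) together with Theorems~\ref{thm:spectral-measure-sequence} and \ref{thm:measurable-function} to get $\mu_\phi^\omega(\sigma_p(\widetilde{H}_\omega))=0$ a.e., and then conclude via the characterization of eigenvalues through the spectral measure of a cyclic vector (or a sum $\mu_k^\omega+\mu_{k+1}^\omega$ of two consecutive ones in the double-sided case). The only cosmetic difference is your choice of the pair $m,m+1$ where the paper takes $\delta_0,\delta_1$, which is immaterial.
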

\begin{proof}
 Observe that
$\sigma_p(\widetilde{H}_\omega)$ does not depend on
$\omega(m),\omega(m+1)$. Thus, it follows from
  Theorems~\ref{thm:spectral-measure-sequence},
  \ref{thm:measurable-function} and Corollary
  \ref{cor:measurability-spectrum} that
  \begin{equation}
    \label{eq:eigenvalues-disjoint}
    \mu_\phi^\omega(\sigma_p(\widetilde{H}_\omega))=0
  \end{equation}
for $\mathbb{P}$-a. e. $\omega$.

If $n_1$ (or $n_2$) is finite, take $\phi=\delta_{n_1+1}$
($\phi=\delta_{n_2-1}$), and, taking into account that
$\lambda\in\sigma_p(H_\omega)$ if and only if
$\mu_{n_1+1}^\omega(\{\lambda\})>0$, the theorem follows from
(\ref{eq:eigenvalues-disjoint}).

Now, assume that both $n_1$, $n_2$
are infinite and choose consecutively $\phi=\delta_0$ and
$\phi=\delta_1$. Then
\begin{equation*}
  (\mu_{0}^\omega+\mu_{1}^\omega)(\sigma_p(\widetilde{H}_\omega))=0
\end{equation*}
for $\mathbb{P}$-a. e. $\omega$. Since
\begin{equation*}
  \sigma_p(H_\omega)=\{\lambda\in\reals:
(\mu_{0}^\omega+\mu_{1}^\omega)(\{\lambda\})>0\}
\end{equation*}
(see \cite[Eq.\,2.141]{MR1711536}), the result
follows.
\end{proof}
\begin{corollary}
  Assume that at least one of the numbers $n_1,n_2$ is infinite. Then
  Theorem~\ref{thm:matrix-eigenvalues-submatrix} holds with any
  $\widetilde{I}\subsetneq I$.
\end{corollary}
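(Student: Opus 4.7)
The plan is to split into two subcases and reduce to Theorem~\ref{thm:matrix-eigenvalues-submatrix} whenever possible, handling a single residual edge case by a direct Wronskian argument. Given $\widetilde{I}\subsetneq I$, the complement $I\setminus\widetilde{I}$ consists of at most two contiguous pieces of $\integers$ (one on each side of the interval $\widetilde{I}$). If at least one of these pieces contains two consecutive integers $\{m,m+1\}$ with $n_1+1\le m\le n_2-2$, then $\widetilde{I}\subset I\setminus\{m,m+1\}$ and Theorem~\ref{thm:matrix-eigenvalues-submatrix} applies verbatim.

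Otherwise, each piece has at most one element. A short case analysis using $\widetilde{I}\subsetneq I$ together with the corollary's standing hypothesis (at least one of $n_1,n_2$ is infinite) forces, up to the obvious left/right symmetry, $n_1$ to be finite, $n_2=+\infty$, and $\widetilde{I}=\{n_1+2,n_1+3,\dots\}$, so that $I\setminus\widetilde{I}=\{n_1+1\}$. In this case the eigenvalues $r_k(\omega)$ of $\widetilde{H}_\omega$ depend only on the coordinates $\omega(n)$ with $n\ge n_1+2$; in particular they are invariant under changes of the \emph{single} coordinate $\omega(n_1+1)$, so Theorem~\ref{thm:spectral-measure-sequence} does not apply literally (it requires invariance under two consecutive coordinates).

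The key additional ingredient is that $n_1$ being finite makes $\delta_{n_1+1}$ cyclic for $H_\omega$ (Remark~\ref{rem:simplcity-multiplicity}), so every eigenvector $\psi$ of $H_\omega$ has $\psi(n_1+1)\ne 0$ (otherwise the recursion (\ref{eq:main-difference-cases}) would force $\psi\equiv 0$). I would then replay the Wronskian computation from the proof of Theorem~\ref{thm:spectral-measure-sequence} using only this one perturbed coordinate: suppose $r\in\reals$ were an eigenvalue of both $H_{\omega_1}$ and $H_{\omega_2}$, where $\omega_1,\omega_2$ coincide on $\widetilde{I}$ and $\omega_1(n_1+1)\ne\omega_2(n_1+1)$; the eigenvectors $\psi_1,\psi_2$ would both satisfy $(\tau u)(n)=r\,u(n)$ for every $n\ge n_1+2$, by the Green formula $W_n(\psi_1,\psi_2)$ would be constant on $n\ge n_1+1$, and by Remark~\ref{rem:self-adjoint-wronskian} it would vanish in the limit $n\to\infty$. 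Hence $\psi_2=c\,\psi_1$ on $[n_1+1,\infty)$ for some $c\ne 0$; comparing the two eigenvalue equations at $n=n_1+1$ and using $\psi_1(n_1+1)\ne 0$ would force $\omega_1(n_1+1)=\omega_2(n_1+1)$, a contradiction.

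To conclude, I would use Corollary~\ref{cor:measurability-spectrum} applied to $\widetilde{H}_\omega$ to obtain a measurable enumeration $\{r_k(\omega)\}_k$ of $\sigma_p(\widetilde{H}_\omega)$, each $r_k$ independent of $\omega(n_1+1)$. The Wronskian step together with continuity of $p_{n_1+1}$ and Fubini's theorem (exactly as in the proof of Theorem~\ref{thm:spectral-measure-sequence}) yields $\mu_{n_1+1}^\omega(\{r_k(\omega)\})=0$ for $\mathbb{P}$-a.e.\ $\omega$, and a countable union gives $\mu_{n_1+1}^\omega(\sigma_p(\widetilde{H}_\omega))=0$ a.s.; by cyclicity this is equivalent to $\sigma_p(H_\omega)\cap\sigma_p(\widetilde{H}_\omega)=\emptyset$ a.s. The main obstacle is recognizing that the verbatim hypothesis of Theorem~\ref{thm:spectral-measure-sequence} fails in exactly this edge case, and that one must re-derive a one-index variant of its Wronskian argument, powered by the cyclicity of $\delta_{n_1+1}$ at the finite endpoint.
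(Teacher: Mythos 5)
Your proposal is correct, and for the generic case it coincides with the paper: whenever the complement $I\setminus\widetilde{I}$ contains two consecutive indices of $I$, one simply invokes Theorem~\ref{thm:matrix-eigenvalues-submatrix} (equivalently, Theorems~\ref{thm:spectral-measure-sequence} and \ref{thm:measurable-function} together with Corollary~\ref{cor:measurability-spectrum}), and your identification of the only residual case --- one endpoint finite, say $n_1$, the other infinite, and $\widetilde{I}=I\setminus\{n_1+1\}$ --- matches the paper's. Where you genuinely diverge is in how that edge case is settled. The paper simply cites Fu and Hochstadt \cite{MR0344937}: for the once-truncated matrix the disjointness $\sigma_p(H_\omega)\cap\sigma_p(\widetilde{H}_\omega)=\emptyset$ holds \emph{deterministically, for every} $\omega$, so nothing probabilistic is needed there. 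You instead build a self-contained one-coordinate variant of the Wronskian/Fubini argument of Theorem~\ref{thm:spectral-measure-sequence}: using $\psi(n_1+1)\ne 0$ (forced by (\ref{eq:main-difference-cases}) at the finite endpoint), constancy of the Wronskian via (\ref{eq:green}), and Remark~\ref{rem:self-adjoint-wronskian}, you show at most one value of $\omega(n_1+1)$ can turn a prescribed $r$ into an eigenvalue, and then conclude almost surely via continuity of $p_{n_1+1}$, the measurable enumeration from Corollary~\ref{cor:measurability-spectrum}, Theorem~\ref{thm:measurable-function}, and Fubini. This is sound and has the advantage of keeping the proof internal to the paper's machinery; the paper's citation buys the stronger, surely-deterministic conclusion with no extra work. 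In fact your Wronskian computation, run with $\omega_1=\omega_2=\omega$ and the eigenvector of $\widetilde{H}_\omega$ extended by zero to the coordinate $n_1+1$, reproves the deterministic statement directly (the proportionality would force $\psi(n_1+1)=0$), so the probabilistic layer in your treatment of the edge case is not even needed.
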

\begin{proof}
  Assume for example that $n_2=+\infty$ and $n_1$ finite. Choose
  $\widetilde{I}=I\setminus\{n_1+1\}$. It is known that
  $\sigma_p(H_\omega)\cap\sigma_p(\widetilde{H}_\omega)=\emptyset$ for
  every $\omega$ \cite{MR0344937}. If we take any other
  $\widetilde{I}\subsetneq I$, Theorems
  \ref{thm:spectral-measure-sequence} and
  \ref{thm:measurable-function} can be applied. The other cases are
  handled analogously.
\end{proof}
\begin{remark}
  A more general situation could be considered along the same
  lines. Indeed, assume the same conditions as in Theorem
  \ref{thm:matrix-eigenvalues-submatrix} and let
 $r_k(\widetilde{H}_\omega)$ be a measurable real valued function of
  $\omega$ determined by $\widetilde{H}_\omega$. Then
  \begin{equation*}
   \mathbb{P}(\{\omega\in\Omega:\sigma_p(H_\omega)\cap
  \cup_kr_k(\widetilde{H}_\omega)\ne\emptyset\})=0\,.
  \end{equation*}
  For example each $r_k$ could be a matrix entry, a moment or any
  other quantity associated to $\widetilde{H}_\omega$.
\end{remark}

\begin{acknowledgments}
  We thank D. Damanik and F. Gesztesy for useful comments and
  pertinent hints to the literature. Stimulating discussions with
  J. Breuer, H. Krueger and H. Schulz-Baldes are also acknowledged.
\end{acknowledgments}


\begin{thebibliography}{10}

\bibitem{MR1255973}
Akhiezer, N.~I. and Glazman, I.~M.: \emph{Theory of linear operators in
  {H}ilbert space}.
\newblock Dover Publications Inc., New York, 1993.


\bibitem{MR0222718}
Berezanski\u\i, Ju. M.: \emph{Expansions in eigenfunctions of selfadjoint
operators}.
\newblock  Translations of Mathematical Monographs \textbf{17},
American Mathematical Society, Providence, RI, 1968.


\bibitem{MR1102675}
Carmona, R. and Lacroix, J.:\emph{Spectral theory of
  random Schrödinger operators}.
\newblock Probability and its
Applications. Birkhäuser, Boston, 1990.

\bibitem{MR0883643}
Cycon, H., Froese, R., Kirsch, W. and Simon, B.:
\emph{Schrödinger operators with application to quantum mechanics and
  global geometry}.
Texts and Monographs in Physics. Springer-Verlag, Berlin, 1987.

\bibitem{MR1971777}
Denisov, S. and Simon, B.: Zeros of orthogonal polynomials on the real
line.
\newblock \emph{J. Approx. Theory} \textbf{121}(2) (2003), 357--364.

\bibitem{MR0344937}
Fu, L. and Hochstadt, H.: Inverse theorems for Jacobi matrices.
\newblock \emph{J. Math. Anal. Appl.}  \textbf{47}  (1974), 162--168.



\bibitem{MR1616422}
Gesztesy, F. and Simon, B.: $m$-functions and inverse spectral
analysis for finite and semi-infinite Jacobi matrices.
\newblock \emph{J. Anal. Math.}  \textbf{73} (1997), 267--297.

\bibitem{MR1690090}
Gordon, A. Y. and Kechris, A. S.: Measurable enumeration of
eigenelements.
\newblock \emph{Appl. Anal.} \textbf{71}(1--4) (1999), 41--61.

\bibitem{MR1779620}
Jak\u{s}i\'{c}, V and Last, Y.: Spectral structure of {A}nderson type
{H}amiltonians
\newblock \emph{Invent. math} \textbf{141}(3) (2000), 561--567.

\bibitem{MR1335452}
Kato, T.: \emph{Perturbation theory for linear operators.}
\newblock Classics in Mathematics. Springer-Verlag, Berlin, 1995.

\bibitem{naboko}
Naboko, S., Pchelintseva I., and Silva, L. O.: Discrete spectrum in a
critical coupling case of {J}acobi matrices with spectral phase
transitions by uniform asymptotic analysis.
\newblock \emph{J. Approx. Theory} \textbf{161}(1) (2009)
314--336.


\bibitem{MR1223779}
Pastur, L. and Figotin, A.: \emph{Spectra of random and
  almost-periodic operators}.
Grundlehren der Mathematischen Wissenschaften \textbf{297}.
\newblock Springer-Verlag, Berlin, 1992.

\bibitem{MR0924157}
Rudin, W.: \emph{Real and complex analysis}. Third
  edition.
\newblock McGraw-Hill, New York, 1987.




\bibitem{MR2105088}
Simon, B.: \emph{Orthogonal polynomials on the unit
    circle}.  Part 1. American Mathematical Society Colloquium
  Publications \textbf{54}-1, American Mathematical Society,
  Providence, RI, 2005.

\bibitem{MR1711536}
Teschl, G.: \emph{Jacobi operators and completely integrable nonlinear
   lattices}.
 Mathematical Surveys and Monographs \textbf{72}.
 \newblock American Mathematical Society, Providence, RI, 2000.


\end{thebibliography}
\end{document}